\documentclass[10pt,a4paper]{article}
\usepackage[utf8]{inputenc}
\usepackage{authblk}
\usepackage{amsmath}
\usepackage{natbib}
\usepackage{amsthm}
\usepackage{hyperref}
\usepackage{algorithm}
\usepackage[noend]{algpseudocode}
\usepackage{tikz}
\usetikzlibrary{shapes}
\usetikzlibrary{arrows}

\newtheorem{definition}{Definition}
\newtheorem{theorem}{Theorem}
\newtheorem{lemma}{Lemma}
\newtheorem{example}{Example}
\newtheorem{corollary}{Corollary}

\newcommand{\Vars}{\ensuremath{\mathcal{V}}}
\newcommand{\Fsig}{\ensuremath{\mathcal{F}}}
\newcommand{\talg}[2]{\ensuremath{\mathcal{T}(#1,#2)}}
\newcommand{\genOfst}[1]{\ensuremath{\prec_{#1}}}
\newcommand{\genOf}[1]{\ensuremath{\preceq_{#1}}}
\newcommand{\approxOf}[1]{\ensuremath{\approx_{#1}}}
\newcommand{\Egen}[3]{\mathcal{G}_{#1}(#2,#3)}
\newcommand{\mcsg}{\mathit{mcsg}}

\author{ David M. Cerna\thanks{Email: dcerna@cs.cas.cz,\ Orchid:0000-0002-6352-603X, \ Funding: Czech Science Foundation Grant 22-06414L and Cost
Action CA20111 EuroProofNet.} }
\date{%
    Czech Academy of Sciences Institute of Computer Science\\[2ex]%
    \today
}
\title{A Note On Square-free Sequences and Anti-unification Type}

\begin{document}

\maketitle

\begin{abstract}
\textbf{Error: Peer-review process exposed an error in Theorem~\ref{thm:three} that, unfourtunately, is not repairable. Idempotent semigroups are always finite. See~\cite{Green_Rees_1952,DBLP:conf/ki/SiekmannS81a} for details}
\textit{Anti-unification} is a fundamental operation used for inductive inference. It is abstractly defined as a process deriving from a set of symbolic expressions a new symbolic expression possessing certain
commonalities shared between its members. We consider anti-unification over term algebras where some function symbols are interpreted as \textit{associative-idempotent} ($f(x,f(y,z)) = f(f(x,y),z)$ and $f(x,x)=x$, respectively) and show that there exists generalization problems for which a \textit{minimal complete set of solutions} does not exist (\textit{Nullary}), that is every complete set must contain comparable elements with respect to the \textit{generality relation}. In contrast to earlier techniques for showing the nullarity of a generalization problem, we exploit combinatorial properties of complete sets of solutions to show that comparable elements are not avoidable. We show that every complete set of solutions contains an infinite chain of comparable generalizations whose structure is isomorphic to a subsequence of an infinite \textit{square-free} sequence over three symbols.   
\end{abstract}

\section{Introduction}
\label{sec:intro}
\textit{Anti-unification} is a fundamental operation used for inductive inference. It is abstractly defined as a process deriving from a set of symbolic expressions a new symbolic expression possessing certain commonalities shared between its members. In this work, we consider anti-unification over term algebras where some function symbols are interpreted as \textit{associative-idempotent}, i.e., over the equational theory $\{f(x,f(y,z)) = f(f(x,y),z)\ , \ f(x,x)=x\}$. We use results from the theory of square-free sequences~\citep{thue1906} to show that for certain anti-unification problems, a minimal complete set of solutions does not exist, i.e., the theory is \textit{Nullary}. Concerning unification, this theory was considered independently by F. Baader~\cite{10.1007/BF02328451} and Schmidt-Schauss~\cite{10.1007/BF02328450} and shown to be \textit{Nullary}. The argument presented in this paper exploits the existence of square-free infinite sequences over a ternary alphabet. Our construction is theory-specific, but it is likely generalizable to a method for deducing the anti-unification type of related theories. We leave this to future work.   
\section{Preliminaries}
\label{sec:prelim}
Let  \Vars\ be a countable set of variables and \Fsig\ a set of function symbols with a fixed arity. The set of terms derived from \Fsig\ and \Vars\ is 
denoted by $\talg{\Fsig}{\Vars}$, whose members are constructed using the grammar  $t ::= x 
\mid f(t_1,\dots,t_n)$, where $x\in\Vars$ and $f\in \Fsig$ with arity $n\geq 0$. When 
$n=0$, $f$ is called a {\it constant}. 
A term $t$ is \emph{ground} if no variables occur in $t$. The {\em size} of a term is defined inductively as: $|x|=1$, and 
$\textstyle |f(t_1,\ldots,t_n)| = 1 + \sum_{i=1}^{n} |t_i|$. A {\it substitution} is a function $\sigma: \Vars \to \talg{\Fsig}{\Vars}$ such that $\sigma(x)\neq  x$ for only finitely many variables. Substitutions are extended to terms as usual. We use the postfix notation for substitution application to terms and write $t\sigma$ instead of $\sigma(t)$.

\subsection{Equational Anti-unification}
In this work, we focus on equational anti-unification. Thus, we refrain from presenting syntactic variants of the concepts discussed below. For such details, we refer to the recent survey on the topic~\cite{DBLP:conf/ijcai/CernaK23}.

\begin{definition}[Equational theory \cite{DBLP:books/daglib/0092409}]
\label{def:equtheo}
Let $E$ be a set of equational axioms. The relation $\approxOf{E} =: \{ (s,t) \in \talg{\Fsig}{\Vars} \times \talg{\Fsig}{\Vars} \; | \;  E\models s\approx t\}$ is  called the \emph{equational theory} induced by $E$.  
\end{definition}

\begin{definition}[$E$-generalization] 
\label{def:egen}
Let $s,t\in \talg{\Fsig}{\Vars}$ and $E$ a set of equational axioms. Then \emph{$t$ is more general then $s$ over $E$}, denoted $s\genOf{E} t$, if there exists a substitution $\sigma$ such that $s\sigma\approxOf{E} t$. Furthermore, we refer to a term $r\in \talg{\Fsig}{\Vars}$ as an \emph{$E$-generalization} of $s$ and $t$ if $r\genOf{E}  s$ and $r\genOf{E} t$. The set of all $E$-generalizations of $s$ and $t$ is denoted as $\Egen{E}{s}{t}$. By $\prec_E$, we denote the strict relation induced by $\genOf{E}$.
\end{definition}

\begin{definition}[Minimal complete set]
\label{def:mcsg}
Let $s,t\in \talg{\Fsig}{\Vars}$. Then the \emph{minimal complete set of $E$-generalizations of  $s$ and $t$}, denoted as $\mcsg_E(s,t)$,  is a subset of $\Egen{E}{s}{t}$ satisfying:
\begin{itemize}
\item[] \hspace{-2em}\textbf{Completeness:} For each $r\in \Egen{E}{s}{t}$ there exists $r'\in \mcsg_E(s,t)$ such that $r\genOf{E} r'$.
\item[] \hspace{-2em}\textbf{Minimality:} If $r,r'\in \mcsg_E(s,t)$ and $r\genOf{E} r'$, then $r=r'$.
\end{itemize}
\end{definition}

\begin{definition}[Anti-unification type]
\label{def:typemcsg}
The anti-unification type of an equational theory $E$ may have one of the following forms: \begin{itemize}
\item[] \hspace{-2em}\textbf{Unitary:} $\mcsg_{E}(s,t)$ exists for all  $s,t\in \talg{\Fsig}{\Vars}$ and is  singleton. 
\item[] \hspace{-2em}\textbf{Finitary:} $\mcsg_{E}(s,t)$ exists and is finite for all  $s,t\in \talg{\Fsig}{\Vars}$, and there exist  $s',t'\in  \talg{\Fsig}{\Vars}$ for which  $1 <$ 

\hspace*{2.1em} $|\mcsg_{E}(s',t')| < \infty$. 
\item[] \hspace{-2em}\textbf{Infinitary:} $\mcsg_{E}(s,t)$ exists for all  $s,t\in \talg{\Fsig}{\Vars}$, and there exist $s',t'\in  \talg{\Fsig}{\Vars}$ such that  $\mcsg_{E}(s',t')$  is infinite. 
\item[] \hspace{-2em}\textbf{Nullary:} for some  $s,t\in   \talg{\Fsig}{\Vars}$, $\mcsg_{E}(s,t)$ does not exist. 
\end{itemize}
\end{definition}

\begin{example}  
Syntactic AU  is \textit{unitary}~\cite{Plotkin70,Reynolds70}, AU over associative (A) and commutative (C)  theories is \textit{finitary}~\cite{DBLP:journals/iandc/AlpuenteEEM14},  AU over idempotent theories is \textit{infinitary}~\cite{DBLP:journals/tocl/CernaK20}, and AU with multiple unital equations is \textit{nullary}~\cite{DBLP:conf/fscd/CernaK20}.
\end{example}

We focus on associative-idempotent theories, i.e., some function symbols $f\in \mathcal{F}$ are interpreted over the equational axioms $f(x,f(y,z)) = f(f(x,y),z)$ and $f(x,x)=x$. This theory will be denoted $\textit{AI}$. We say that a term $t$ is in \emph{ idempotent flat normal form} if (i) $t\in \mathcal{V}$, (ii) $t= g(t_1,...,t_n)$ where $n\geq 0$, $g$ is a free function (constant) symbol and each $t_i$, $1 \leq i \leq n$, is in idempotent flat normal form, or (iii) $t= f(t_1,...,t_n)$ where $f$ is an associative-idempotent symbol, $n\geq 2$, for $1\leq i\leq n$, $t_i$ is a term whose head symbol is not $f$, and the $t_i$'s are mutually distinct. Furthermore, given a term $t= f(t_1,\ldots,t_n)$ where $f$ is associative-idempotent function symbol, t is in idempotent flat normal form, and $n>1$, we define  $t\vert_i = t_i$ and $t\hspace*{-.2em}\parallel_j = f(t_j,\ldots,t_n)$ where  $1\leq i\leq n$, $1\leq j\leq n$, and if $j= n$, then $t\hspace*{-.2em}\parallel_j = t_n$.

\section{Nullarity of AI Anti-unification}
\label{subsec1}

\noindent To show nullarity of associative-idempotent anti-unification we consider $E$-generalizations of the terms $h(a,b)$ and $h(b,a)$ where $h$ is an AI function symbol. First, we show that particular syntactically restricted terms generalizes this problem. 
\begin{definition}[simple words] 
\label{def:simwor}
Let $C\subset \talg{\Fsig}{\Vars}$ be a non-empty set of constants and $x\in \mathcal{V}$. Then the \emph{$(x,C)$-simple words}, denoted $\mathcal{W}(x,C)$, is the set of idempotent flat normal form terms constructed from the  following set:
\[\{ h(x,s_1,\ldots,s_n,x) \mid   n\geq 0 \wedge \forall i ( 1\leq i\leq n \Rightarrow  s_i\in \{x\}\cup C)\}.\]
\end{definition}
\noindent The following theorem shows that $\mathcal{W}(x,C)\subseteq \mathcal{G}_{AI}(h(a,b),h(b,a))$.
\begin{lemma}
\label{thm:one}
Let $a,b\in \talg{\Fsig}{\Vars}$ be constants, $x\in \mathcal{V}$, and $t\in \mathcal{W}(x,\{a,b\})$. Then $t\sigma \approxOf{AI} s$ where $x\sigma = s$ and $s\in \{h(a,b),h(b,a)\}$.

\end{lemma}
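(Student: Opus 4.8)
The plan is to reduce the statement to a single absorption identity in the free idempotent semigroup generated by $a$ and $b$. First I would fix a convenient representative for $t$. By the definition of $\mathcal{W}(x,\{a,b\})$, the term $t$ is the idempotent flat normal form of some $u = h(x,s_1,\ldots,s_n,x)$ with $n\geq 0$ and each $s_i\in\{x,a,b\}$. Since $t\approxOf{AI} u$ and $\approxOf{AI}$ is a congruence stable under substitution, it suffices to prove $u\sigma \approxOf{AI} s$. The base case $n=0$ is immediate, as $u\sigma = h(s,s)\approxOf{AI} s$ by the idempotence axiom $f(x,x)=x$ instantiated at $h$.

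For the general case I would apply $\sigma$ and then flatten using associativity of $h$. Writing $\hat r$ for the word over $\{a,b\}$ obtained by flattening a term $r$ whose head is $h$ and whose leaves lie in $\{a,b\}$, we have $\hat s\in\{ab,ba\}$, and each $s_i\sigma$ flattens to $\hat s$, $a$, or $b$ according to whether $s_i$ is $x$, $a$, or $b$. Hence $u\sigma$ flattens to the word $W = \hat s\, m\, \hat s$, where $m = \widehat{s_1\sigma}\cdots\widehat{s_n\sigma}$ is a word whose content is contained in $\{a,b\}$. Because $h$ is associative and idempotent, proving $u\sigma\approxOf{AI} s$ is equivalent to showing $W\approxOf{AI}\hat s$, i.e.\ that flanking $m$ by $\hat s$ on both sides collapses back to $\hat s$.

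The crux is therefore the absorption claim: $\hat s\, m\, \hat s \approxOf{AI} \hat s$ whenever the content of $m$ is contained in the content of $\hat s$. I would isolate this as a standalone lemma and prove it by appeal to the structure of the two-generator free band, which I expect to be the main obstacle. The subtlety is that idempotence does \emph{not} merely delete adjacent repeated letters: one must first use associativity to regroup $W$ so that a repeated block becomes visible before $h(t,t)\approxOf{AI} t$ applies (for instance, $abab\approxOf{AI} ab$ must be read as $(ab)(ab)$, not as a letterwise reduction). Two routes are available. The direct route repeatedly regroups $\hat s\, m\, \hat s$ to expose and collapse a squared block, arguing by induction on the structure of $m$. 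The cleaner route invokes the fact that every word over $\{a,b\}$ is $\approxOf{AI}$-equal to one of the finitely many canonical band elements $a,b,ab,ba,aba,bab$; then the identity $\hat s\, z\, \hat s \approxOf{AI}\hat s$ reduces to a finite case check over these canonical values of $z$, which one verifies directly (e.g.\ $ab\cdot ba\cdot ab \approxOf{AI} ab$ and the analogous equalities).

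Discharging the absorption claim, and substituting $m$ for $z$, yields $W\approxOf{AI}\hat s = s$, which establishes $u\sigma\approxOf{AI} s$ and hence $t\sigma\approxOf{AI} s$. Choosing $s=h(a,b)$ and $s=h(b,a)$ in turn then shows that every simple word is simultaneously a generalization of both $h(a,b)$ and $h(b,a)$, as announced before the statement. The reductions in the first two paragraphs are routine once the flat normal form and the congruence properties of $\approxOf{AI}$ are in hand; essentially all of the mathematical content sits in the idempotent-absorption step.
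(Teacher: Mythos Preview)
Your proposal is correct. The ``direct route'' you sketch---induction on the length of the middle word $m$, with a case split on its outermost letters to expose and collapse a squared block---is precisely what the paper does: after flattening $t\sigma$ to $h(a,b,r_1,\ldots,r_m,a,b)$ with all $r_i\in\{a,b\}$, the paper splits on whether $r_1=b$ or $r_{m+1}=a$ (immediate collapse against the flanking $b$ or $a$) versus $r_1=a$ and $r_{m+1}=b$ (one further look at $r_2,r_m$ exposes a square), and closes by well-founded induction on $m$.

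Your ``cleaner route'' through the free band on two generators is a genuinely different argument. Reducing $m$ to one of the six canonical elements $a,b,ab,ba,aba,bab$ replaces the endpoint case analysis by a finite verification, or more structurally by a one-line appeal to the Green--Rees description of equality in free bands (both $ab$ and $ab\cdot m\cdot ab$ have content $\{a,b\}$, maximal proper-content prefix $a$, and maximal proper-content suffix $b$). This buys a more conceptual proof with no ad hoc regrouping. It is perhaps worth noting---though not in the write-up of this lemma---that the finiteness of free bands underpinning this route is exactly the fact the paper's abstract flags as fatal to Theorem~\ref{thm:three}.
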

\begin{proof}
Without loss of generality, we assume that $x\sigma = h(a,b)$. Then $t\sigma = h(a,b,r_1,\ldots, r_m,a,b)$ where $m\geq 0$ and for all $1\leq i \leq m$, $r_i\in \{a,b\}$.  
We continue by well-founded induction on $m$. Let $t\sigma = h(a,b,r_1,\ldots,r_{m+1},a,b)$. If either $r_1=b$ or $r_{m+1}=a$, then $t\sigma \approxOf{AI} h(a,b,r_1',\ldots,r_{m}',a,b)$ and by the induction hypothesis the theorem holds. If $r_1=a$ and $r_{m+1}=b$,  Then $t\sigma = h(a,b,a,r_2,\ldots,r_{m},b,a,b)$ and no matter the choice for $r_2$ and $r_m$, $t\sigma \approxOf{AI} h(a,b,r_1',\ldots,r_{k}',a,b)$, where $k<m+1$,  thus completing the proof.
\end{proof}

\noindent Lemma~\ref{thm:one} entails the following simple corollary.
\begin{corollary}
\label{cor:one}
Let $a,b\in \talg{\Fsig}{\Vars}$ be constants, $x\in \mathcal{V}$. Then $\mathcal{W}(x,\{a,b\})\subseteq \mathcal{G}_{\textit{AI}}(h(a,b),h(b,a)).$
\end{corollary}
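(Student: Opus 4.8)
The plan is to read the corollary off directly from Lemma~\ref{thm:one}, instantiating the target term $s$ to $h(a,b)$ and to $h(b,a)$ in turn. Unfolding Definition~\ref{def:egen}, establishing $t\in\Egen{\textit{AI}}{h(a,b)}{h(b,a)}$ amounts to producing two substitutions, one witnessing $t\genOf{\textit{AI}} h(a,b)$ and one witnessing $t\genOf{\textit{AI}} h(b,a)$; Lemma~\ref{thm:one} manufactures precisely these witnesses, so the corollary reduces to assembling them.

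Concretely, I would fix an arbitrary $t\in\mathcal{W}(x,\{a,b\})$ and first note that, since $a$ and $b$ are constants, $x$ is the only variable occurring in $t$. Hence a substitution with $x\sigma=s$ is fully determined on $t$ and acts trivially on the constant positions, so $t\sigma$ is exactly the term obtained by replacing every $x$ with $s$. Applying Lemma~\ref{thm:one} with $x\sigma_1=h(a,b)$ yields $t\sigma_1\approxOf{\textit{AI}}h(a,b)$, which is by definition $t\genOf{\textit{AI}}h(a,b)$; applying it a second time with $x\sigma_2=h(b,a)$ yields $t\sigma_2\approxOf{\textit{AI}}h(b,a)$, that is $t\genOf{\textit{AI}}h(b,a)$. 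Together these place $t$ in $\Egen{\textit{AI}}{h(a,b)}{h(b,a)}$, and since $t$ was arbitrary the inclusion $\mathcal{W}(x,\{a,b\})\subseteq\Egen{\textit{AI}}{h(a,b)}{h(b,a)}$ follows.

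I do not anticipate a genuine obstacle here: the corollary is essentially bookkeeping over Lemma~\ref{thm:one}. The only point meriting a moment's care is that the lemma is phrased for a single unspecified $s\in\{h(a,b),h(b,a)\}$, so one must justify invoking it for \emph{both} targets. This is immediate, however, because the choice of $s$ in the lemma is free, and the ``without loss of generality'' symmetry between $a$ and $b$ used in its proof covers either instantiation equally.
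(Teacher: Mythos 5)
Your proof is correct and takes exactly the route the paper intends: the paper gives no separate argument, simply noting that Lemma~\ref{thm:one} entails the corollary, and your double instantiation of $s$ to $h(a,b)$ and $h(b,a)$ to produce the two witnessing substitutions required by Definition~\ref{def:egen} is precisely that entailment spelled out.
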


There may be generalizations of $h(a,b)$ and $h(b,a)$ that are not contained in $\mathcal{W}(x,\{a,b\})$, i.g. $h(x,t,y)$.  However, any term of the form $h(t_1,t_2)$ or $h(t_2,t_1)$ where $t_1\in \{a,b\}$ is not a generalization of $h(a,b)$ and $h(b,a)$. The following lemma addresses this observation.

\begin{lemma}
\label{thm:two}
Let $a,b\in \talg{\Fsig}{\Vars}$ be constants and $t\in  \talg{\Fsig}{\Vars} $ be a term of the form $h(t_1,t_2)$ or $h(t_2,t_1)$ where $t_1\in \{a,b\}$. Then $t\not \in\mathcal{G}_{AI}(h(a,b),h(b,a))$.
\end{lemma}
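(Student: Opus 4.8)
The plan is to exploit the fact that associative--idempotent operations admit two particularly simple quotients --- the left-zero and right-zero bands --- which record exactly the leftmost, respectively rightmost, atom of a term and are invariant under $\approxOf{AI}$. Fixing a constant at one end of $t$ then pins down that invariant, and the pinned value cannot simultaneously match both $h(a,b)$ and $h(b,a)$, whose first and last atoms are swapped.

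First I would introduce two two-element AI-algebras on the carrier $\{a,b\}$: the left-zero band $L$ with product $u *_L v = u$, and the right-zero band $R$ with product $u *_R v = v$. Both are trivially associative and idempotent, so each is a model of the axioms defining $\textit{AI}$. I then fix evaluations $[\![\cdot]\!]_L,[\![\cdot]\!]_R : \talg{\Fsig}{\Vars} \to \{a,b\}$ interpreting $h$ by $*_L$ (resp. $*_R$), sending the constants $a \mapsto a$, $b\mapsto b$, and mapping every other function symbol and variable to $a$ (any fixed choice works). By soundness of equational reasoning, $s \approxOf{AI} s'$ implies $[\![s]\!]_L = [\![s']\!]_L$ and $[\![s]\!]_R = [\![s']\!]_R$; in particular $\approxOf{AI}$-equivalent terms receive equal values in $L$ and in $R$.

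The key observation is that, because $*_L$ is a left projection, $[\![h(a,u)]\!]_L = a *_L [\![u]\!]_L = a$ for every term $u$ --- the value is forced by the first argument alone, independently of $u$ or of any substitution applied to it; symmetrically $[\![h(u,a)]\!]_R = a$, and likewise with $b$ in place of $a$. Meanwhile $[\![h(a,b)]\!]_L = a$, $[\![h(b,a)]\!]_L = b$, $[\![h(a,b)]\!]_R = b$, and $[\![h(b,a)]\!]_R = a$. I would then run the four cases of the statement. If $t = h(a,t_2)$, then for any substitution $\sigma$ we have $[\![t\sigma]\!]_L = a *_L [\![t_2\sigma]\!]_L = a \neq b = [\![h(b,a)]\!]_L$, so $t\sigma \not\approxOf{AI} h(b,a)$; hence $t \not\genOf{AI} h(b,a)$ and $t$ fails to generalize the pair. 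The case $t = h(b,t_2)$ fails $\genOf{AI} h(a,b)$ by the same $L$-argument, and the two forms $t = h(t_2,a)$ and $t = h(t_2,b)$ fail $\genOf{AI} h(a,b)$ and $\genOf{AI} h(b,a)$ respectively by the mirror-image argument in $R$. Since a generalization must lie below both $h(a,b)$ and $h(b,a)$, in every case $t \notin \mathcal{G}_{AI}(h(a,b),h(b,a))$.

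I expect the only real subtlety to be the justification that the leftmost/rightmost atom is genuinely $\approxOf{AI}$-invariant; the left-/right-zero band interpretations package this cleanly, so that the heart of the proof reduces to the remark that a left (right) projection ignores its second (first) argument. A purely syntactic alternative would track the head argument $t\vert_1$ (resp. the last argument) of the idempotent flat normal form and argue directly that it is preserved under flattening and idempotent merging; this is essentially the same invariant, but verifying invariance under the rule $h(x,x)=x$ applied to composite $x$ is more delicate by hand, which is why I prefer the model-theoretic route.
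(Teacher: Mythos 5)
Your proof is correct, and it takes a genuinely different route from the paper's. The paper argues syntactically: for $t = h(a,t')$, any instance $t\sigma$ has an idempotent flat normal form that is either $a$ itself or of the shape $h(a,s_1,\ldots,s_n)$, so its leftmost atom is $a$, whereas that of $h(b,a)$ is $b$; this tacitly relies on the leftmost (resp.\ rightmost) argument of the normal form being invariant under $\approxOf{AI}$, a point the paper does not spell out. You establish exactly that invariant semantically, via the left-zero and right-zero bands as two-element models of the axioms, after which soundness of equational logic does all the work; this makes the ``delicate'' step you flagged at the end essentially free, at the cost of introducing model-theoretic machinery the paper avoids. One detail needs repair: you interpret ``every other function symbol'' as the constant $a$, but if the signature contains further associative-idempotent symbols besides $h$, a constant operation violates $f(x,x)=x$, your algebra would then not be a model of $\textit{AI}$, and soundness could not be invoked. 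The fix is immediate: interpret every AI symbol by $*_L$ (respectively $*_R$) and only the free symbols by constants. With that adjustment your four-case analysis mirrors the paper's ``without loss of generality'' split, and both proofs isolate the same invariant --- the first/last atom of a term is preserved under $\approxOf{AI}$, and is pinned down by a constant end argument regardless of the substitution applied.
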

\begin{proof}
Consider $t= h(a,t')$ without loss of generality. Let $\sigma$ be a substitution. Observe that either $t\sigma \approxOf{AI} a$ or $t\sigma \approxOf{AI} h(a,s_1,\ldots,s_n)$ in idempotent flat normal form.  In either case, $t$ does not generalize $h(b,a)$. 
\end{proof}
Observe that terms in $\mathcal{W}(x,\{a,b\})$ form words over an alphabet of three symbols. Exploiting the result by  \textit{Axel Thue}~\citep{thue1906} (See \href{https://oeis.org/A005678}{oeis.org/A005678}) that, over ternary alphabets, infinite square-free words are constructible, we build an infinite sequence of generalizations strictly ordered by $\genOfst{\textit{AI}}$. In particular, we use the infinite word found in the OEIS as sequence \href{https://oeis.org/A005678}{A005678} to construct the infinite sequence of generalizations. Using Algorithm~\ref{alg:genconstruct} we construct the set $\mathcal{W}_{th}(x,\{a,b\})= \{\textrm{sqGen}(n)\mid n\geq 1\}$. Observe that $\mathcal{W}_{th}(x,\{a,b\})\subset \mathcal{W}(x,\{a,b\})\subseteq \mathcal{G}_{AI}(h(a,b),h(b,a))$.

    \begin{algorithm}
        \caption{Construction of square-free generalizations}\label{euclid}
        \begin{algorithmic}[1]
            \Procedure{sqGen}{$n$}    \Comment{We assume that $n\geq1$}
			 \State $g\leftarrow h(a,b,a,x)$			         
            \For{\{$i=1$ ; $i<n$ ; $i\hspace*{.1em}\raisebox{.1em}{${\scriptstyle +\hspace*{-.01em}+}$}$\}} 
				\If{$g\vert_{i+1}= a$}
					\State $g\leftarrow h(g,a,b,a,x)$  \Comment{We flatten $\textbf{g}$  in $h(\textbf{g},a,b,a,x)$}			         
				\EndIf            
            	\If{$g\vert_{i+1}= b$}
					\State $g\leftarrow h(g,b,a,b,x)$	\Comment{We flatten $\textbf{g}$  in $h(\textbf{g},b,a,b,x)$}		         
				\EndIf 
				\If{$g\vert_{i+1}= x$ and $g\vert_{i}= a$ }
					\State $g\leftarrow h(g,b,x,a,x)$		\Comment{We flatten $\textbf{g}$  in $h(\textbf{g},b,x,a,x)$}	         
				\EndIf 
				\If{$g\vert_{i+1}= x$ and $g\vert_{i}= b$ }
					\State $g\leftarrow h(g,a,x,b,x)$	\Comment{We flatten $\textbf{g}$  in $h(\textbf{g},a,x,b,x)$}		         
				\EndIf 

            \EndFor
            \State \textbf{return} $g\hspace{-.2em}\parallel_{4}$
            \EndProcedure
        \end{algorithmic}
        \label{alg:genconstruct}
    \end{algorithm}

\begin{example}
The following terms are contained in $\mathcal{W}_{th}(x,\{a,b\})$:
\begin{align*}
\textrm{sqGen}(1)=&\ x & \textrm{sqGen}(3)=&\ h(x,b,a,b,x,a,b,a,x)\\
\textrm{sqGen}(2)=&\ h(x,b,a,b,x) & \textrm{sqGen}(4)=&\ h(x,b,a,b,x,a,b,a,x,b,x,a,x)
\end{align*}
\end{example}

\noindent Now we prove the nullarity of associative-idempotent anti-unification:
\begin{theorem}
\label{thm:three}
Let $g_1\in \mathcal{W}(x,\{a,b\})$, $g_2\in \mathcal{W}_{th}(x,\{a,b\})$ where $|g_2| > |g_1|$,  $g_3\approxOf{AI} g_1\sigma\in  \mathcal{W}(x,\{a,b\})$, and $x\sigma = g_2$.  Then $|g_3|\geq |g_2|$ and $g_1\genOfst{AI} g_3$.
\end{theorem}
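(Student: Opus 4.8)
The plan is to treat every term as its idempotent flat normal form and thereby reduce the statement to a combinatorial claim about words over the three-letter alphabet $\{x,a,b\}$. Write $g_1 = h(x,c_1,\dots,c_m,x)$ and $g_2 = h(x,d_1,\dots,d_p,x)$ for their normal forms; by Corollary~\ref{cor:one} both lie in $\mathcal{G}_{AI}(h(a,b),h(b,a))$, and since $g_2\in \mathcal{W}_{th}(x,\{a,b\})$ the word $x\,d_1\cdots d_p\,x$ is square-free over $\{x,a,b\}$. The inclusion $g_1 \genOf{AI} g_3$ is then immediate: by hypothesis $g_3 \approxOf{AI} g_1\sigma$ with $x\sigma = g_2$, so $\sigma$ itself witnesses $g_1 \genOf{AI} g_3$ directly from Definition~\ref{def:egen}. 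The entire weight of the theorem therefore falls on the size bound $|g_3|\ge|g_2|$ and, for strictness, on ruling out $g_3 \genOf{AI} g_1$.

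First I would describe the flattening of $g_1\sigma$ explicitly. Substituting $x\mapsto g_2$ into $g_1$ and applying associativity yields the concatenation $g_2\,(c_1\sigma)\cdots(c_m\sigma)\,g_2$, where $c_i\sigma = g_2$ if $c_i=x$ and $c_i\sigma=c_i\in\{a,b\}$ otherwise. The crucial structural observation is that all of the resulting junctions are \emph{clean}, i.e.\ no two adjacent arguments coincide: because $g_1$ is in idempotent flat normal form, no $x$ in $g_1$ is adjacent to another $x$, so every letter sitting next to an occurrence of $x$ in $g_1$ lies in $\{a,b\}$; since each inserted copy of $g_2$ begins and ends with $x$, the splice points are always of the form $x$-next-to-$\{a,b\}$, which never triggers an idempotent collapse. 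Hence no reduction is created at the seams, and the only way $g_1\sigma$ could shrink under $\approxOf{AI}$ is through an idempotent identification $ww\approxOf{AI}w$ occurring \emph{inside} a single block.

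The heart of the argument, and the step I expect to be the main obstacle, is to show that the leading occurrence of $g_2$ survives normalization, so that $|g_3|\ge|g_2|$. Here is where square-freeness does the work: an idempotent collapse $ww\to w$ requires a factor of the form $ww$, and the leading block is exactly the square-free word $x\,d_1\cdots d_p\,x$, which by construction contains no such factor. I would argue that because the seams are clean and this prefix is square-free, every term in $\mathcal{W}(x,\{a,b\})$ that is $\approxOf{AI}$-equivalent to $g_1\sigma$ must retain a sub-block of length at least $|g_2|$; quantifying this faithfully---that no chain of block-idempotence rewrites can erode a square-free prefix below its own length once the surrounding seams are non-collapsing---is the delicate combinatorial core, and I would isolate it as a separate lemma about square-free words under the band congruence rather than attempt it inline.

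Finally I would close the strictness $g_1 \genOfst{AI} g_3$. One direction, $g_1\genOf{AI}g_3$, is already in hand, so it remains to exclude $g_3\genOf{AI}g_1$. Suppose for contradiction $g_3\tau\approxOf{AI}g_1$ for some $\tau$. Substitutions fix the constants $a,b$, so the $a/b$-pattern of $g_3$---which, by the size bound, is at least as long and as intricate as the square-free pattern of $g_2$---would have to be matched by the strictly shorter $g_1$; reconciling this against $|g_1|<|g_2|\le|g_3|$ and the canonicity of the normal form within $\mathcal{W}(x,\{a,b\})$ yields the contradiction, and in particular shows $g_1\not\approxOf{AI}g_3$. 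Together with $g_1\genOf{AI}g_3$ this gives $g_1\genOfst{AI}g_3$, completing the proof.
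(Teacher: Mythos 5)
Your proposal follows essentially the same route as the paper's own proof, and it founders on exactly the same rock --- one the paper itself concedes (in the erratum at the head of the abstract) is fatal. The step you correctly flag as ``the delicate combinatorial core'' and defer to a separate lemma --- that a square-free block cannot be eroded under the band congruence once the seams are non-collapsing --- is not merely delicate: it is false, so no such lemma can be proved. The relation $\approxOf{AI}$ is the \emph{congruence closure} of $f(x,x)\approx x$, so it permits expansions $w \to ww$ of arbitrary subwords as well as contractions; square-freeness only rules out directly applicable contraction redexes, not equivalence to shorter words via expand-then-contract chains. By the theorem of Green and Rees~\cite{Green_Rees_1952}, the free idempotent semigroup (free band) over a finite alphabet is finite --- over the three letters $x,a,b$ it has only $159$ elements --- so \emph{every} sufficiently long word, square-free or not, is $\approxOf{AI}$-equivalent to a word of bounded length. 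Hence the size bound $|g_3|\geq|g_2|$ fails once $g_2$ is large enough, and the theorem itself is false. Your auxiliary claim that the seams are ``clean,'' so that $g_1\sigma$ can only shrink through an identification occurring inside a single block, collapses for the same reason: the band congruence is not generated by contraction of existing square factors and does not respect block boundaries.

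For comparison, the paper's proof commits the same error in a more overt form: it argues by induction that if $g_1\sigma$ is not in idempotent flat normal form there must be a block-level factor $s_i,\ldots,s_{i+2k-1}$ forming a square, and that contracting such factors is the only way equivalence can decrease size --- i.e., it silently identifies $\approxOf{AI}$-equivalence with reachability by square contractions, never invoking square-freeness at all. You localized the gap more honestly than the paper did, and isolating it as a standalone lemma is exactly what exposes the problem: the lemma you would need contradicts Green--Rees local finiteness, so neither your argument nor the paper's can be repaired.
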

\begin{proof}$g_1\sigma = h(g_2,s_1,\ldots,s_n,g_2)$ where $n\geq 0$ and for $1\leq i\leq n$, $s_i\in\{a,b\}$ or $s_i=g_2$. If $n=0$, then $g_1\sigma \approxOf{AI} g_2$, i.e. $|g_1\sigma| \geq |g_2|$. We show that it is always the case that $|g_1\sigma| \geq |g_2|$ by well-founded induction on $n$. Let $g_1\sigma = h(s_0,s_1,\ldots,s_{n+1},s_{n+2})$ where $s_0=s_{n+2}=g_2$. If $g_1\sigma$ is in idempotent flat normal form, then $|g_1\sigma| \geq |g_2|$  trivially holds. If not, then there exists $0\leq i< n+2$ and $1\leq k\leq \frac{n+2}{2}$ such that the sequence $s_i,\ldots,s_{i+(k-1)}$ is equivalent to $s_i,\ldots,s_{i+(2k-1)}$. Thus, $g_1\sigma  \approxOf{AI} h(g_2,s_1,\ldots,s_{m},g_2)$ where $m\leq n$ and by the induction hypothesis $|g_1\sigma| \geq |g_2|$. Thus, for any term $t$, where $t\approxOf{AI} g_1\sigma$, $|t| \geq |g_2|$.  This implies $|g_3| \geq |g_2|$. By a similar argument,  there does not exist a substitution $\sigma$ such that $|g_2\sigma| < |g_2|$. Furthermore, $g_3\approxOf{AI} t \in  \mathcal{W}(x,\{a,b\})$, and  $g_1\genOfst{AI} g_3$. 
\end{proof}

\begin{theorem}
Associative-idempotent anti-unification is nullary. 
\end{theorem}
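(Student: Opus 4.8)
The plan is to instantiate the general nullarity criterion on the concrete problem $s = h(a,b)$, $t = h(b,a)$: I will show that $\mcsg_{AI}(h(a,b),h(b,a))$ cannot exist by proving that no complete set of generalizations can simultaneously be minimal. By Corollary~\ref{cor:one} the whole family $\mathcal{W}(x,\{a,b\})$, and in particular the infinite set $\mathcal{W}_{th}(x,\{a,b\}) = \{\textrm{sqGen}(n) \mid n \geq 1\}$, lies inside $\Egen{AI}{h(a,b)}{h(b,a)}$, and by Thue's theorem these square-free words have strictly unbounded size. This supplies an infinite reservoir of generalizations of growing size to feed into Theorem~\ref{thm:three}.

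First I would assume, for contradiction, that a set $M = \mcsg_{AI}(h(a,b),h(b,a))$ satisfying Completeness and Minimality exists. The engine of the argument is a \emph{strict-specialization step}: for every generalization $g \in \Egen{AI}{h(a,b)}{h(b,a)}$ there is a further generalization $g'$ with $g \genOfst{AI} g'$. To produce $g'$ I would pick $n$ large enough that $|\textrm{sqGen}(n)| > |g|$, form the substitution $\sigma$ with $x\sigma = \textrm{sqGen}(n)$, and let $g'$ be a simple word with $g' \approxOf{AI} g\sigma$. Theorem~\ref{thm:three} then yields both $|g'| \geq |\textrm{sqGen}(n)| > |g|$ and $g \genOfst{AI} g'$, while Corollary~\ref{cor:one} guarantees $g' \in \mathcal{W}(x,\{a,b\}) \subseteq \Egen{AI}{h(a,b)}{h(b,a)}$, so $g'$ is genuinely a generalization. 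Square-freeness is exactly what makes this work: the inserted word cannot be collapsed by any idempotent square reduction, so the size genuinely increases and the strictness $g \genOfst{AI} g'$ (rather than mere $\genOf{AI}$) is forced by the size gap together with the $\approxOf{AI}$-minimality of square-free representatives.

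With the strict-specialization step in hand the contradiction is short. Applying it to an arbitrary $m \in M$ gives a generalization $g'$ with $m \genOfst{AI} g'$; Completeness then supplies $m' \in M$ with $g' \genOf{AI} m'$. Transitivity gives $m \genOf{AI} m'$, and this is strict, since $m' \genOf{AI} m$ would make $g' \genOf{AI} m' \genOf{AI} m$ contradict $m \genOfst{AI} g'$. Hence $m, m' \in M$ are distinct and comparable, contradicting Minimality. This shows $\mcsg_{AI}(h(a,b),h(b,a))$ does not exist, so AI anti-unification is nullary; iterating the same step from $m$ exhibits the infinite chain $m \genOfst{AI} m' \genOfst{AI} m'' \genOfst{AI} \cdots$ inside any complete set promised by the abstract, each link indexed by a longer initial segment of the square-free sequence.

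The step I expect to be the main obstacle is the strict-specialization claim for an \emph{arbitrary} $m \in M$, because Theorem~\ref{thm:three} is stated only for $g_1 \in \mathcal{W}(x,\{a,b\})$, whereas a priori an element of $M$ could be a more complicated term (several variables, nested $h$, or free symbols). Closing this gap requires a normalization lemma showing that every maximally specific generalization of $h(a,b)$ and $h(b,a)$ is $\approxOf{AI}$-equivalent to a simple word — using Lemma~\ref{thm:two} to exclude the degenerate shapes $h(t_1,t_2)$ with $t_1 \in \{a,b\}$ — or, alternatively, a direct extension of Theorem~\ref{thm:three} to arbitrary generalizations. A secondary point needing care is verifying that $g\sigma$ really normalizes into $\mathcal{W}(x,\{a,b\})$ (the boundary occurrences of $x$ must survive the substitution) so that $g'$ stays in the intended class and Corollary~\ref{cor:one} applies.
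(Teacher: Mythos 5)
Your overall strategy---extract from any purported $\mcsg_{AI}(h(a,b),h(b,a))$ two distinct comparable elements by combining Theorem~\ref{thm:three} with Completeness, contradicting Minimality---is the same as the paper's, and your final contradiction step (strictness of $m \genOf{AI} m'$ via transitivity) matches the paper's reasoning. The genuine gap is exactly the step you flag yourself and then leave open: your ``strict-specialization step'' is asserted for an \emph{arbitrary} $g \in \Egen{AI}{h(a,b)}{h(b,a)}$, in particular for an arbitrary $m \in M$, but Theorem~\ref{thm:three} applies only when the term being specialized lies in $\mathcal{W}(x,\{a,b\})$; flagging this hole does not fill it, so the proof is incomplete at its central step. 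Worse, the repair you sketch---a normalization lemma stating that every maximally specific generalization is $\approxOf{AI}$-\emph{equivalent} to a simple word---is not the right statement: $AI$-equivalence cannot identify distinct variables, so a generalization such as $h(y,a,z)$, which begins and ends with different variables, is not $\approxOf{AI}$-equivalent to any member of $\mathcal{W}(x,\{a,b\})$. What is true, and what the argument actually needs, is weaker: such a term has a simple-word \emph{instance}.

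The paper closes the gap in precisely this instance-based way, and in doing so never needs a claim about arbitrary elements of the complete set. It applies Completeness not to an arbitrary $m$ but to the concrete simple word $h(x,b,a,b,x)$, obtaining $g_1 \in \mathcal{C}$ with $h(x,b,a,b,x) \genOf{AI} g_1$. Because $g_1$ generalizes $h(a,b)$ and $h(b,a)$, it is built only from $h$, $a$, $b$ and variables, and Lemma~\ref{thm:two} forces its first and last arguments (in flat normal form) to be variables. The substitution $\mu$ sending every variable to $x$ then yields $g_3 = g_1\mu \in \mathcal{W}(x,\{a,b\})$ with $g_1 \genOf{AI} g_3$---an instance, not an equivalent term---and Theorem~\ref{thm:three} is applied to $g_3$ rather than to $g_1$. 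Strictness then propagates along the chain $g_1 \genOf{AI} g_3 \genOfst{AI} g_3\sigma \genOf{AI} g_4$ by transitivity (if $g_4 \genOf{AI} g_1$ held, then $g_3\sigma \genOf{AI} g_3$ would follow, contradicting $g_3 \genOfst{AI} g_3\sigma$), producing the two comparable elements $g_1, g_4 \in \mathcal{C}$. Replacing your universal strict-specialization claim by this localized construction turns your sketch into the paper's proof. A final caveat: both your argument and the paper's stand or fall with Theorem~\ref{thm:three}, which the paper's abstract retracts as irreparably wrong (finitely generated idempotent semigroups are finite, by Green--Rees), so neither argument in fact establishes nullarity; but that defect is inherited from the cited theorem, not introduced by your reasoning.
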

\begin{proof}
We assume that $\mathcal{C}= \mcsg_E(h(a,b),h(b,a))$ exists. Then there exist  $g_1\in \mathcal{C}$ such that $h(x, b, a, b, x)\genOf{AI} g_1$. Furthermore, $g_1 = h(t_1,\ldots,t_n)$ and forall $1\leq i\leq n$, $t_i$ only contains occurances of  variables and the constants $a$ and $b$. By Lemma~\ref{thm:two}, $g_1$ has the form $h(t_1,t_2,t_3)$ where $t_1,t_3\in \Vars$. Furthermore, applying a substitution $\mu$ to $g_1$ that maps all variables to $x$ results in a term $g_3\in \mathcal{W}(x,\{a,b\})$ that generalizes $h(a,b)$ and $h(b,a)$ (Lemma~\ref{thm:one}). Now let $g_2\in \mathcal{W}_{th}(x,\{a,b\})$ be such that $|g_2| > |g_3|$. By Theorem~\ref{thm:three},  $|g_3\sigma|\geq |g_2|$ and  $g_1\genOfst{AI}g_3\sigma$ where $x\sigma = g_2$. Observe that, by completeness, there exists $g_4
\in \mathcal{C}$ such that $g_3\sigma\genOf{AI} g_4$. However, then $g_1\genOfst{AI}g_4$ contradicting the minimality of $\mathcal{C}$, thus completing the proof.
\end{proof}
\noindent\textbf{Acknowledgements:} I would like to thank Temur Kutsia for the comments and suggestions that improved this work.

\bibliographystyle{abbrvnat}
\bibliography{references}

\end{document}